\theoremstyle{remark}
\newtheorem*{example}{Example (TDC)}
\newtheorem{theorem}{Theorem}
\newtheorem{lemma}{Lemma}
\newtheorem{corollary}{Corollary}
\newtheorem*{assumptions*}{Assumptions}
\newenvironment{customthm}[1]
  {\innercustomthm}
  {\endinnercustomthm}
\title{Controlling the False Discovery Rate via Competition: is the +1 needed?}
\author{Andrew Rajchert\thanks{araj9923@uni.sydney.edu.au}}
\author{Uri Keich\thanks{uri.keich@sydney.edu.au}}
\affil{School of Mathematics and Statistics, The University of Sydney,\\ Camperdown, NSW 2006, Australia}
\date{}
\begin{document}

\maketitle

\begin{abstract}
Barber and Cand\`es (2015) control of the FDR in feature selection relies
on estimating the FDR by the number of knockoff wins +1 divided by the number of original wins. We study the necessity of the +1 in general settings.
\end{abstract}

\section{Introduction}

In the classical multiple testing problem, we have $n$ statistical tests, each comparing a null hypothesis, $H_j$, to an alternative one.
Upon rejecting $H_j$ in favor of its alternative, we claim to make a discovery, which is a false one if in fact $H_j$ holds
(is a true null). The goal is to maximize the number of discoveries while having some control over the type I errors,
and when $n$ is large this is commonly done by controlling the false discovery rate (FDR) \cite{bh}.

The FDR is defined as the expected value of the false discovery proportion (FDP), $\text{FDP}=I/(R\vee1)$, where $R$ is the number of
reported discoveries, of which $I$ are incorrect/false, and $a\vee b=\max\{a,b\}$. Thus,
\[
\text{FDR} = E(\text{FDP}) = E\left(\frac{I}{R\vee1}\right),
\]
where the expectation is taken with respect to the true nulls.
The goal then is to maximize the number of discoveries subject to guaranteeing that the $FDR \le \alpha$, where $\alpha\in(0,1)$
is some tolerance threshold. How this problem is solved varies based on what statistical tests are being performed and what guarantees
are available on the $p$-value of each statistical test.

Barber and Cand\`es recently looked at controlling the FDR in the context of sequential hypothesis testing, where
the hypotheses are given in a prescribed order that presumably has some importance: hypotheses are typically ranked
according to some prior information on how likely are they to be rejected.
A p-value $p_j$ is associated with each $H_j$; however, unlike in the canonical setup these p-values do not have to
be particularly informative. Indeed, all that is required is that for a true null $H_j$, $p_j$, 
stochastically dominates the standard uniform distribution, $U(0,1)$: for any $u \in (0, 1)$, $P(p_j \le u) \le u$.
To establish finite-sample FDR control (explained next) they further assumed that the true null p-values are identically distributed
and independent of each other, as well as of the false null $p$-values \cite{bar}.

Barber and Cand\`es proved that the following Selective Sequential Step+ (SSS+) procedure that they introduced
rigorously controls the FDR in this setting at any predetermined level $\alpha \in (0, 1)$:
SSS+ uses a parameter $c \in (0, 1)$ and defines the rejection threshold $K_1$ as
\begin{align}
K_1 = \textstyle \max_0 \displaystyle \Big\{k \in [n] : \frac{|\{j \leq k : p_j > c\}| + 1}{ |\{j \leq k : p_j \leq c\}| \vee 1} \leq \frac{1-c}{c} \alpha\Big\}, \label{seq1}
\end{align}
where $[n] = \{1, 2, ..., n\}$ and $\max_0 (A) = \max(A)$ if $A \neq \emptyset$, otherwise $\max_0(A) = 0$ (so if the inequality in (\ref{seq1}) is never satisfied, $K_1 = 0$). SSS+ then rejects (labels as discoveries) all null hypotheses $H_j$ among the first $K_1$ hypotheses for which $p_j \le c$.

The intuition behind the definition of $K_1$ is that if the true null $p$-values are $U(0,1)$ random variables,
then $\frac{c}{1-c}|\{j \leq k : p_j > c\}|$ provides an estimated upper bound on the number of true null hypotheses $H_j$ with
$j \leq k$ and $p_j \leq c$. Therefore, dividing this expression by the corresponding number of discoveries,
$|\{j \leq k : p_j \leq c\}|\vee 1$, yields an approximate upper bound of the FDP, and hence of the FDR,
which is being kept below $\alpha$.
As we are maximizing over all $k$, the +1 in the numerator is used to counteract the bias introduced with maximizing this ratio.

In practice, the usefulness of this procedure comes from the crudeness of the $p$-values it relies on: they only need
to stochastically dominate the uniform distribution (rather than follow the $U(0,1)$ distribution), and we only ask
if they are larger or smaller than the predefined value $c$.
In particular, with $c = 1/2$, SSS+ generalizes the increasingly popular competition-based approach to controlling the FDR.
Indeed, SSS+ was introduced by Barber and Cand\`es at the same time they introduced their knockoff approach to controlling
the FDR in variable selection in a linear regression problem.
Specifically, they used an elaborate construction to pair each original variable with an artificially introduced
``knockoff'' variable so they can compete the two against each other in terms of their contribution to the linear model.
The number of knockoff wins is then used to estimate and control the FDR using SSS+ with $c=1/2$ \cite{bar}.

Exactly the same approach was used for a longer time in the analysis of tandem mass spectrometry data \cite{elias}. Often referred to as target-decoy competition (TDC), it relies on associating with each hypothesis a "target" score and a "decoy" score (analogous to the knockoff score). These scores are presumably generated so that for true null hypotheses the target and decoy are equally likely to win (have the higher score) independently of all other hypotheses. Thus, should the target score be larger than the decoy score, we set $p_j = 1/2$, otherwise $p_j = 1$. With $c = 1/2$, this satisfies the SSS+ assumptions which is then used to determine which "target wins" are taken as discoveries.

Thus, Barber and Cand\`es' proof that SSS+ controls the FDR for any finite $n$ established the
finite sample FDR-control of their knockoff+ procedure, as well TDC's (with the same +1 ``correction'',
where the latter was also established independently by He et al.~\cite{he}).
Following Barber and Cand\`es' work and the introduction of a more flexible formulation of the variable
selection problem in the model-X framework of Cand\'es et al.~\cite{candes:panning}, competition-based FDR control
has gained a lot of interest in the statistical and machine learning communities, where it
has been applied to various applications in biomedical research. However, to establish finite sample (as opposed
to asymptotic or empirical) FDR control those methods rely on SSS+, or, less frequently, on a genaralization of it called
Adaptive SeqStep~\cite{lei:power} for implementing the actual competition-based FDR control.

As SSS+ forms the basis for finite sample FDR-control, it is natural to ask whether the +1 in the numerator of (\ref{seq1})
is really necessary. Specifically, in this work we consider a variant of SSS+, which we call $SSS_t+$, where the rejection
threshold is defined in terms of a more general additive constant $t$:
\begin{align}
K_t = \textstyle \max_0 \displaystyle \Big\{k \in [n] : \frac{|\{j \leq k : p_j > c\}| + t}{|\{j \leq k : p_j \leq c\}|\vee 1} \leq \frac{1-c}{c} \alpha\Big\}. \label{Filter}
\end{align}

For any $t < 1$, we have $K_t \geq K_1$, thus applying the procedure with smaller values of $t$ could only increase its power compared to SSS+. Barber and Cand\`es showed that when using $t = 1$, the $SSS_t+$ procedure controls the FDR, i.e., $FDR(SSS_1+) \leq \alpha$ \cite{bar}. He et al.~have further showed that for any fixed $t \in (0, 1)$, there exists some $\alpha, c \in (0, 1)$ and an example of $n$ hypotheses for which $SSS_t+$ fails to control the FDR, i.e., $FDR(SSS_t+) > \alpha$ when using the altered rejection threshold given by (\ref{Filter}) \cite{he}.

One gap these results leave, which is addressed in this paper is what occurs when, as is commonly done in practice, $\alpha$ and $c$ are predetermined. That is, given $\alpha$ and $c$, is it possible to find a $t < 1$ such that $SSS_t+$ still controls the FDR, allowing for a more powerful statistical procedure?

We answer this question in two parts. First, we discuss the behaviour of $K_t$ at different values of $t$ when $\alpha$
and $c$ are fixed, from which we make conclusions about what values of $t$ will necessarily control the FDR.
Second, through the use of an explicit construction we find values of $t<1$ where the FDR is uncontrolled.

Our latter analysis is asymptotic: we prove there exists a sufficiently large $n$, such that $SSS_t+$ fails to control the FDR
when applied to our construction with $n$ hypotheses. Because in practice $n$ is also fixed in advance,
we complement the theoretical result with numerical simulations showing that for the same values of $t<1$, $SSS_t+$ 
apparently already fails to control the FDR when applied to our construction with even moderately large $n$.

\section{Equivalent Rejection Thresholds}

Let $T_k := |\{j \leq k : p_j \leq c\}|$, and $D_k := |\{j \leq k : p_j > c\}|$. In the context of TDC, these are the number of target wins and number of decoy wins in the first $k$ hypotheses respectively. With this notation our rejection threshold becomes
\begin{align}
K_t = \textstyle \max_0 \displaystyle\Big\{k \in [n] : \frac{D_k + t}{T_k\vee 1} &\leq \frac{1-c}{c}\alpha\Big\}. \label{eq2}
\end{align}

Our first theorem and its corollary show that, in many cases, it is possible to find a $t < 1$ such that $SSS_t+$ still controls the FDR, however those cases are of no practical use because $K_t = K_1$.

\begin{theorem}
\label{theorem1}
Suppose $\alpha, c \in (0, 1)$ are given such that $\frac{1-c}{c}\alpha = \frac{a}{b}$ where $a$ and $b$ are positive coprime integers. Let $m = \lceil tb \rceil$, or equivalently, $m \in \mathbb{N}$ is chosen so $t \in (\frac{m-1}{b}, \frac{m}{b}]$. Then, for any $k \geq 1$,
\begin{align}
    \frac{D_k + t}{T_k\vee1} \leq \frac{1-c}{c}\alpha \iff \frac{D_k + m/b}{T_k\vee1} \leq \frac{1-c}{c}\alpha \label{equi}
\end{align}

\end{theorem}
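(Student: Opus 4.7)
The plan is to clear denominators and reduce the equivalence to a comparison of integers, exploiting the fact that $bD_k$ and $a(T_k\vee 1)$ are integers while $bt$ can be any real number in $(m-1,m]$.

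First I would multiply both inequalities in \eqref{equi} by the positive integer $b(T_k\vee 1)$. Using $\frac{1-c}{c}\alpha = a/b$, the two conditions become
\begin{align*}
bD_k + bt &\leq a(T_k\vee 1), \\
bD_k + m &\leq a(T_k\vee 1).
\end{align*}
Set $N := a(T_k\vee 1) - bD_k$ and observe that $N \in \mathbb{Z}$, since $a,b,D_k$ are integers and $T_k\vee 1 \in \mathbb{N}$. The desired equivalence is then just $bt \leq N \iff m \leq N$.

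The reverse implication ($\Leftarrow$) is immediate: the hypothesis $t \in \bigl(\tfrac{m-1}{b}, \tfrac{m}{b}\bigr]$ gives $bt \leq m$, so $m \leq N$ yields $bt \leq N$.

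For the forward implication ($\Rightarrow$), I would use the other half of the interval: $bt > m-1$. Combined with $bt \leq N$, this gives $N > m-1$, and since $N$ is an integer, $N \geq m$ as required.

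There is no real obstacle here; the entire content of the theorem is the elementary observation that when the threshold $a/b$ has denominator $b$ and $D_k$, $T_k$ are integers, the real parameter $t$ can only affect whether \eqref{eq2} holds by crossing multiples of $1/b$, which is exactly what the definition $m = \lceil tb \rceil$ records. Note that coprimality of $a$ and $b$ plays no role in this particular argument (it ensures $a/b$ is the reduced form of $\frac{1-c}{c}\alpha$, which presumably matters for the subsequent corollary about when $K_t = K_1$).
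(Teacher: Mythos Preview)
Your proof is correct and cleaner than the paper's. Both arguments rest on the same observation---that once the threshold is $a/b$ with integer $a,b$ and $D_k, T_k\vee 1$ are integers, only the ``$1/b$-bucket'' of $t$ matters---but the executions differ. The paper divides through by $a$ to write the condition as $\frac{b(D_k+t)}{a} \le T_k\vee 1$, applies the ceiling function, and then invokes the division algorithm $bD_k + m = aq + r$ to verify that $\big\lceil \tfrac{bD_k + bt}{a}\big\rceil = \big\lceil \tfrac{bD_k + m}{a}\big\rceil$ via a case analysis on $r$. You instead multiply through by $b$, isolate the single integer $N = a(T_k\vee 1) - bD_k$, and reduce everything to the one-line fact that $bt \le N \iff m \le N$ when $N\in\mathbb{Z}$ and $bt\in(m-1,m]$. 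Your route avoids the ceiling/division-algorithm detour entirely and makes the irrelevance of the coprimality hypothesis (which you correctly flag) transparent.
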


\begin{corollary}
\label{cor1}
If $\alpha$ and $c$ satisfy the conditions of Theorem \ref{theorem1}, then for any $t \in (1-\frac{1}{b}, 1]$, $K_t = K_1$
\end{corollary}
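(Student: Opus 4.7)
The plan is to apply Theorem~\ref{theorem1} twice, once at the generic $t \in (1-\tfrac{1}{b},1]$ and once at $t = 1$, and observe that both applications produce the same rescaled numerator $m/b = 1$. Since $K_t$ and $K_1$ are defined as the $\max_0$ of sets determined by inequalities of the form $\frac{D_k + t}{T_k \vee 1} \le \frac{1-c}{c}\alpha$, it suffices to show these defining sets coincide.

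First I would translate the hypothesis $t \in (1-\tfrac{1}{b},\,1]$ into the language of Theorem~\ref{theorem1}: since $tb \in (b-1,\,b]$, we have $m = \lceil tb \rceil = b$. The theorem then gives, for every $k \ge 1$,
\[
\frac{D_k + t}{T_k \vee 1} \le \frac{1-c}{c}\alpha \iff \frac{D_k + b/b}{T_k \vee 1} \le \frac{1-c}{c}\alpha \iff \frac{D_k + 1}{T_k \vee 1} \le \frac{1-c}{c}\alpha,
\]
where the right-hand side is literally the condition defining $K_1$ via \eqref{eq2}. Notice that the endpoint $t = 1$ also satisfies $\lceil 1 \cdot b \rceil = b$, so the interval $(1 - \tfrac{1}{b},\,1]$ is precisely the $t$-preimage of $m = b$; this is why the range in the corollary statement is half-open on the left and closed on the right.

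Because this equivalence holds for every $k \ge 1$, the two sets
\[
\Big\{k \in [n] : \tfrac{D_k + t}{T_k \vee 1} \le \tfrac{1-c}{c}\alpha\Big\} \quad \text{and} \quad \Big\{k \in [n] : \tfrac{D_k + 1}{T_k \vee 1} \le \tfrac{1-c}{c}\alpha\Big\}
\]
are identical, hence their $\max_0$ values agree and $K_t = K_1$. There is no genuine obstacle here; the only point worth flagging explicitly is the endpoint check at $t = 1$, which lies inside the correct interval $(\tfrac{m-1}{b},\,\tfrac{m}{b}]$ for $m = b$. The corollary is thus essentially a restatement of Theorem~\ref{theorem1} specialized to the top rung $m = b$.
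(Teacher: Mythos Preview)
Your proof is correct and follows essentially the same approach as the paper: both observe that $t \in (1-\tfrac{1}{b},1]$ forces $m = \lceil tb \rceil = b$, then invoke Theorem~\ref{theorem1} to conclude $K_t = K_{m/b} = K_1$. The paper compresses this into a single line, while you have spelled out why the defining sets coincide; the underlying argument is identical.
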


\begin{proof}[Proof of Corollary]
We have $m = \lceil tb \rceil = b$, so by Theorem \ref{theorem1}, $K_t = K_{m/b} = K_1$.
\end{proof}

This corollary implies that although the additive constant $t$ can be reduced in some cases while $SSS_t+$ maintains control of the FDR, this reduction has no change on the rejection threshold, and thus has no effect on the list of discoveries. To gain an increase in power, $t$ must be reduced further such that there are situations where $SSS_t+$ performs differently to SSS+.

\begin{proof}[Proof of Theorem \ref{theorem1}]
As $t \leq m/b$, the $\Longleftarrow$ implication is obvious and we are left to show that $\Longrightarrow$ holds.

Replacing $\frac{1-c}{c}\alpha$ with $a/b$ and rearranging the expression we see that the left-hand side of (\ref{equi}) is equivalent to
\begin{align*}
\frac{b(D_k + t)}{a} \leq T_k\vee1.
\end{align*}

Notice the right hand side of the above inequality is an integer, hence we may apply the ceiling function to the left hand side and preserve the inequality:
\begin{align*}
&\Big\lceil \frac{b(D_k + t)}{a} \Big\rceil \leq T_k\vee1,\\
\implies &\Big\lceil \frac{bD_k + m}{a} - \frac{m-bt}{a}\Big\rceil \leq T_k\vee1.
\end{align*}

As $bD_k + m \in \mathbb{N}$, there must exist $q, r \in \mathbb{N}$ such that $bD_k + m = aq+r$ with $0 \leq r < a$. Also note $bt \in (m-1, m]$, thus $x := m-bt \in [0, 1)$ and $(r-x)/a \in ((r-1)/a, r/a]$. If $r = 0, \lceil r/a\rceil = 0 = \lceil (r-x)/a \rceil$. Otherwise, if $r > 0$, $r \geq 1$ hence $(r-x)/a > 0$ and $r/a \leq 1$ so $\lceil r/a\rceil = 1 = \lceil (r-x)/a \rceil$. In either case, we have $\lceil r/a\rceil = \lceil (r-x)/a \rceil$ therefore
\begin{align*}
\Big\lceil \frac{bD_k + m}{a} - \frac{m-bt}{a}\Big\rceil &= \Big\lceil q + \frac{r-x}{a} \Big\rceil\\ 
&= q + \Big\lceil \frac{r}{a} \Big\rceil\\
&= \Big\lceil \frac{bD_k + m}{a} \Big\rceil.
\end{align*}

It immediately follows that
\begin{align*}
\Big\lceil \frac{bD_k + m}{a} \Big\rceil &\leq  T_k\vee1 .
\end{align*}
Removing the ceiling function and rearranging the inequality as before yields the desired inequality:
\begin{align*}
\frac{D_k + m/b}{T_k\vee1} \leq \frac{a}{b} = \frac{1-c}{c}\alpha.
\end{align*}

\end{proof}


\section{When the FDR Is Uncontrolled}

In this section we introduce a class of multiple testing problems that allows us to establish a lower bound on $t$. That is, if we wish to guarantee $SSS_t+(\alpha, c)$ controls the FDR, then we must set $t \geq t_0(\alpha, c)$. Throughout this section, we make use of one or both parts of the following assumptions on $\alpha, c$ and the true null p-values:

\begin{assumptions*}\
\begin{enumerate}
    \item Suppose $\alpha \in (0, 1)$ and $c \in (0, 1/2]$ are such that $\frac{1-c}{c}\alpha = a/b$ where $a, b \in \mathbb{N}$ are positive coprime integers such that $a < b$.
    \item The true null hypotheses p-values are independent and identically distributed independently of the false nulls with $P(p_j \leq c) = c$.
\end{enumerate}

\end{assumptions*}

Note that in TDC, $c = 1/2$ and $P(p_j\le 1/2)=1/2$ independently of all other hypotheses, so if $\alpha \in (0, 1) \cap \mathbb{Q}$,
then the above assumptions hold. We believe more generally that these assumptions are not overly restrictive and we will revisit them in the Discussion section.

\subsection{The Construction}
Our construction is periodic and determined by $\alpha$ and $c$, or more precisely, by $a$ and $b$. Note that by Assumption 1,
\begin{align*}
\gcd(a, a+b) = \gcd(b, a+b) = 1,
\end{align*}
and therefore there exist multiplicative inverses $a^{-1}$ and $b^{-1}$ of $a$ and $b$ respectively in $\mathbb{Z}_{a+b} = \mathbb{Z}/(a+b)\mathbb{Z}$. Moreover, as $-a \equiv b \pmod{a+b}$,
\begin{align}
-a^{-1} \equiv b^{-1} \pmod{a+b}. \label{e11}
\end{align}

The locations of the true null hypotheses in our construction are periodically prescribed by the following set $L$
(the remaining hypotheses are false nulls):
\begin{align}
L := \big\{k \in [n] : k \equiv -ja^{-1} \pmod{a+b} \text{ for } j \in \{0, 1, ..., 2a\}\big\}. \label{eq4}
\end{align}
That is, the positions of both the true and false null hypotheses are $a+b$ periodic and, as proven in the supplementary text, 
there are $2a+1 \leq a+b$ true nulls in each complete cycle. In particular, if $b = a+1$, there are no false nulls.
In addition, we set all the false null p-values to $p_j = c$. In general this will not necessarily be the case, however
for our purpose here of demonstrating a failure to control the FDR (the next theorem) we are at liberty
to make this choice.

\begin{example}
Suppose $c=1/2$ (as in TDC) and $\alpha=0.1$. Then $\frac{1-c}{c}\alpha = a/b$ where $a=1$ and $b=10$, 
so each complete cycle is made of $a+b=11$ hypotheses of which $2a+1=3$ are true nulls: those in
positions $\{0,-1,-2\} \equiv \{9, 10, 11\} \pmod{11}$. For example, the first cycle, which is made of the hypotheses
$H_1,\dots,H_{11}$, starts with $(a+b)-(2a+1)=b-2=8$ false nulls ($H_j$ with $j\in\{1,\dots,8\}$)
followed by $2a+1=3$ true nulls (with $j\in\{9,10,11\}$). The second cycle
cycle where $j\in\{12,\dots,22\}$ starts with false nulls at $j\in\{12,\dots,19\}$ and ends
with the true nulls at $j\in\{20,21,22\}$ etc.
A similar cycle structure of starting with $b-2a=b-2$ false nulls followed by $2a+1=3$ true null hypotheses
for a total of $a+b=b+1$ hypotheses per complete cycle applies more generally when $\alpha=1/b$ (and $c=1/2$).
\end{example}

\begin{theorem}
\label{theorem2}
Suppose that Assumptions 1 and 2 hold and let $u \in \mathbb{Z}$, $u \geq a$. Then there exists an $n \in \mathbb{N}$ such that when $SSS_t+(\alpha, c)$ is applied to our above construction with $n$ hypotheses and $t = 1-u/b$, it fails to control the FDR; in other words, $FDR(SSS_t+) > \alpha$.
\end{theorem}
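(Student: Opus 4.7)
My plan is to recast the condition in (\ref{eq2}) as a random-walk hitting problem. Set $Z_k := aT_k - bD_k$; then the rejection condition at index $k$ (with $T_k \ge 1$) is equivalent to $Z_k \ge bt = b-u$. By Assumption 2 combined with the construction (\ref{eq4}), $Z$ is a walk whose increment is $+a$ at each of the $b-a-1$ false-null positions per cycle (deterministic) and, independently, $+a$ with probability $c$ or $-b$ with probability $1-c$ at each of the $2a+1$ true-null positions per cycle.

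Next I would analyze the drift. A complete cycle of length $a+b$ has expected $Z$-increment $(a+b)[(2a+1)c-(a+1)]$. Assumption 1 together with $c \le 1/2$ forces $c<(a+1)/(2a+1)$, so the per-cycle drift is strictly negative, $Z_k\to-\infty$ almost surely, and $K_\infty := \sup\{k\ge 1 : Z_k \ge b-u\}$ (with $\sup\emptyset=0$) is almost surely finite. Since $K_t$ equals $K_\infty$ once $n \ge K_\infty$ and $V_{K_t}/(T_{K_t}\vee 1)\le 1$, dominated convergence gives $\mathrm{FDR}_n \to E[V_{K_\infty}/(T_{K_\infty}\vee 1)]$, so it suffices to show the limit exceeds $\alpha$ and then pick $n$ accordingly.

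I would then lower-bound $\text{FDP}(k)=V_k/T_k$ at condition-satisfying positions. Writing $V_k = G_k - D_k$ and $T_k = F_k+V_k$, where $F_k$ and $G_k$ count the false and true nulls up to $k$, the condition $Z_k \ge b-u$ rearranges to $V_k \ge [bG_k+(b-u)-aF_k]/(a+b)$. Using the periodic densities supplied by (\ref{eq4}), namely $F_k=(b-a-1)k/(a+b)+O(1)$ and $G_k=(2a+1)k/(a+b)+O(1)$, one computes that $\text{FDP}(k)\to (a+1)/b$ as $k\to\infty$ on any condition-satisfying $k$; and $(a+1)/b>\alpha$ by exactly the algebra that made the drift negative.

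Finally, to conclude $E[V_{K_\infty}/(T_{K_\infty}\vee 1)]>\alpha$, observe that for $u\ge a$ the walk reaches $b-u\le b-a$ within the first cycle with positive probability: e.g.\ the event that all $2a+1$ true nulls in cycle 1 are target wins has probability $c^{2a+1}$ and forces $Z_{a+b}=a(a+b)>b-u$, so $K_\infty\ge a+b$ there. The main obstacle is showing the expected FDP at $K_\infty$ indeed beats $\alpha$, balancing trajectories with small $K_\infty$ (where the $(a+1)/b$ bound is not yet tight and FDP may dip below $\alpha$) against those with large $K_\infty$ (where FDP is close to $(a+1)/b$). A clean special case that illustrates the idea is $b=a+1$, where there are no false nulls, $\text{FDP}(k)\in\{0,1\}$, and the claim reduces to $P(K_\infty>0)>\alpha$, which can be computed directly from the walk's one-sided hitting probabilities; for general $b$ one exploits the periodic renewal structure of $L$ together with the explicit FDP lower bound to push the expectation over the threshold for sufficiently large $n$.
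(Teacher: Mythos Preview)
Your random-walk reformulation is correct and the drift computation is right: under Assumptions~1--2 the per-cycle increment of $Z_k=aT_k-bD_k$ is $(a+b)[(2a+1)c-(a+1)]<0$, so $K_\infty:=\sup\{k\ge1:Z_k\ge b-u\}$ is almost surely finite and $\mathrm{FDR}_n\to E[V_{K_\infty}/(T_{K_\infty}\vee1)]$. The asymptotic computation that the FDP lower bound at condition-satisfying $k$ tends to $(a+1)/b$ is also correct.

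The genuine gap is the step you yourself flag as ``the main obstacle'': you never establish that $E[V_{K_\infty}/(T_{K_\infty}\vee1)]>\alpha$. Your lower bound on $\mathrm{FDP}(k)$ is only $(a+1)/b+O(1/k)$, and $K_\infty$ is a \emph{fixed} random variable whose law does not change with $n$; once $n$ exceeds every realization of $K_\infty$ the FDR is frozen, so ``sufficiently large $n$'' buys you nothing further. On the event $\{K_\infty=0\}$---which has positive probability when $u=a$ since the walk must first climb to $b-a>0$---the FDP is $0$, and on $\{K_\infty\text{ small}\}$ the $O(1/K_\infty)$ correction is uncontrolled. Your proposal gives no mechanism to show that the mass on large $K_\infty$ compensates for these deficits; the appeal to ``periodic renewal structure'' is not an argument. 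Even in the all-true-null case $b=a+1$, the claim $P(K_\infty>0)>\alpha$ is not a triviality for general $a$ and $c\le1/2$: it requires solving the hitting-probability recursion and comparing to $\tfrac{c}{1-c}\cdot\tfrac{a}{a+1}$.

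The paper closes exactly this gap with two ingredients you do not have. First, Lemma~1 gives the \emph{non-asymptotic} inequality $(D_K+1)/(T_K\vee1)\ge a/b$ whenever $K<n$; this uses the arithmetic placement of $L$ (the residues $-ja^{-1}\bmod(a+b)$) to pin down $K+1\bmod(a+b)$ and rule out the slack your $O(1/k)$ term allows. Second, Lemma~4 uses an exchangeability argument---the $D_K+1$ ``bins'' defined by the decoy wins are equally likely to receive each true-null target win---to show $E[I_K/(D_K+1)\cdot 1_{\{K<n\}}]\to c/(1-c)$. Combining these gives $E[\mathrm{FDP}]\gtrsim (a/b)\cdot c/(1-c)=\alpha$, and Lemma~3 manufactures a fixed-probability event on which the inequality in Lemma~1 is strict, yielding the $\delta\epsilon$ surplus needed for $\mathrm{FDR}>\alpha$. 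Your route, as written, has neither the exact inequality nor the exchangeability identity, and without one of them the expectation bound does not follow.
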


\begin{corollary}
\label{cor2}
Suppose Assumptions 1 and 2 hold with $a = 1$. Then $t = 1$ is the optimal additive constant, i.e., if $t < 1$, then either $K_t = K_1$ or the rejection threshold $K_t$ will not always control the FDR.
\end{corollary}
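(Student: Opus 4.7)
The plan is a two-case dichotomy on $t$ that simply chains Corollary~\ref{cor1} and Theorem~\ref{theorem2}. Under Assumption~1 with $a=1$, $\tfrac{1-c}{c}\alpha = 1/b$, so Corollary~\ref{cor1} specializes to $K_t = K_1$ for every $t \in (1-\tfrac{1}{b}, 1]$. This already settles the first arm of the dichotomy: if $t \in (1-\tfrac{1}{b}, 1)$, the first alternative $K_t = K_1$ holds.

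For the remaining range $0 < t \le 1-\tfrac{1}{b}$, I would set $m = \lceil tb \rceil$, so $m \in \{1,\dots,b-1\}$ and $t \in ((m-1)/b, m/b]$. Theorem~\ref{theorem1} then yields the equivalence $(D_k + t)/(T_k \vee 1) \le 1/b \iff (D_k + m/b)/(T_k \vee 1) \le 1/b$ for every $k \ge 1$; consequently $K_t = K_{m/b}$, so that $SSS_t+$ and $SSS_{m/b}+$ produce identical discovery lists on every realization and share the same FDR.

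To close this sub-case, set $u = b-m$; since $1 \le m \le b-1$, we have $u \ge 1 = a$ and $m/b = 1 - u/b$, so Theorem~\ref{theorem2} supplies some $n$ for which $SSS_{m/b}+$ fails to control the FDR on the construction with $n$ hypotheses. Via the identity $K_t = K_{m/b}$, this failure transfers to $SSS_t+$, yielding the second alternative. The whole argument is a short reduction and I do not expect a substantive obstacle; the essential role of the hypothesis $a=1$ is that Theorem~\ref{theorem2} becomes available for every integer $u \ge 1$, and therefore covers every subinterval $((m-1)/b, m/b]$ with $1 \le m \le b-1$. For $a \ge 2$, the subintervals corresponding to $m \in \{b-a+1, \dots, b-1\}$ would be untreated by Theorem~\ref{theorem2}, which is presumably why the corollary is restricted to $a=1$.
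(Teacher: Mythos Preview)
Your proof is correct and follows essentially the same approach as the paper: a two-case dichotomy that invokes Corollary~\ref{cor1} for $t\in(1-1/b,1]$, then uses Theorem~\ref{theorem1} to reduce any $t\le 1-1/b$ to the form $1-u/b$ with $u\ge 1=a$, and finally applies Theorem~\ref{theorem2} to produce the failing instance. Your closing remark on why the hypothesis $a=1$ is essential is a nice addition that the paper itself defers to the Discussion section.
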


\begin{proof}[Proof of Corollary]
By Corollary \ref{cor1}, $K_t = K_1$ for $t \in (1 - 1/b, 1]$. If $t \leq 1-1/b$, then by Theorem \ref{theorem1} we may assume
without loss of generality that $t = 1-u/b$ where $u \in \mathbb{Z}$, $u \geq 1 = a$. By Theorem \ref{theorem2}, our construction gives an example for which the FDR is not controlled at level $\alpha$. It follows that any value of $t$ that increases the power beyond $t = 1$ will not always control the FDR, thus $t = 1$ is optimal.
\end{proof}

Note that the last corollary applies to TDC for many commonly used thresholds such as
$\alpha \in \{0.01, 0.05, 0.1, 0.2\}$. It follows that for those values of $\alpha$ we cannot improve TDC
by replacing the +1 with some smaller amount. Moreover, our construction in those cases reduces to the one
described in the above example. Using the same construction we show below that although our theoretical
result is stated for an unspecified, sufficiently large number of hypotheses, $n$, using $K_t$ with $t=1-1/b$
seems to fail to control the FDR for even moderate values of $n$ (Figure \ref{fig12}).

\begin{proof}[Proof of Theorem \ref{theorem2}]
The proof of Theorem \ref{theorem2} is inspired by the proof of Theorem 2 of \cite{he} and goes through
a sequence of lemmas that is outlined below with the formal statements of the lemmas and their proofs
provided in the supplementary.

Assume the construction of (\ref{eq4}) is used and let $K = K_t(n)$ as in (\ref{eq2}).
In Lemma 1
, we show that if $K < n$, then
\begin{align}
\frac{D_K + 1}{T_K\vee1} \geq \frac{1-c}{c}\alpha. \label{e1}
\end{align}
The last inequality is useful because the entire TDC approach is based on the idea that $(D_K + 1)/(T_K\vee1)$
provides some estimate of the FDR, which in this case we are trying to show is bigger than $\alpha$.
More specifically, Lemma 3 below shows, we can leverage \eqref{e1} to get a lower bound on the FDP
when $K < n$.

As for the proof of Lemma 1, it relies mostly on the constraints our construction imposes
on the values $K$ can attain given that $K$ has to satisfy:
\[
\frac{D_K + 1-a/b}{T_K\vee1} \leq \alpha < \frac{D_{K+1} + 1-a/b}{T_{K+1}\vee1} .
\]
It follows that $p_{K+1} > c$ (decoy win in TDC), and because we set $p_j = c$ for false nulls it follows
that $K+1\in L$. 

In our TDC example above with $\alpha=0.1$ and $K=K_t$ with $t=1-1/b=0.9$ it follows
that $K\in\{8,9,10\} \pmod{11}$. A more careful analysis that is part of the general proof
shows that $K\equiv8 \pmod{11}$ is not possible. For example, $K=8$ is clearly impossible because
$(D_K+t)/T_K=0.9/8>\alpha$ in this case.
Similarly, note that $(D_{19},T_{19})\in\{(0,19),(1,18),(2,17), (3, 16)\}$ and in the first case $K\ge20$ whereas
in all other cases $(D_{19}+t)/T_{19}>\alpha$ so $K\neq19$, with a similar pattern continuing for all $K \equiv 8 \pmod{11}$. It can then be showed that \eqref{e1}
holds for the remaining $K\in\{9,10\} \pmod{11}$.

In Lemma 2
, we show that $\lim_{n \to \infty} P(K = n) = 0$. Technically we only show this along a subsequence of $m$
complete cycles, $n_m = m(a+b)$, so for the remainder of the proof we assume $n$ is of that form.

Note that in our TDC example $K=n_m=m(1+10)$ if and only if $D_n\le m-1$ and $T_n \ge 8m+(2m+1)$ (false + true null
target wins): indeed, for those bounds we have $(D_n+0.9)/T_n \le (m-0.1)/(10m+1)<\alpha$, and with $D_n \geq m$
this inequality reverses.
Thus, $K=n_m$ only if the number of decoy wins among the true nulls is less than half the number of target wins
among the same true nulls. But the probability of each is a half independently of everything else, so
the probability that $K=n_m=11m$ in this case goes to 0 as $m\to\infty$.

Combining the two lemmas, we find (\ref{e1}) holds on the events $A_m = \{K < n_m\}$ whose probability tend to 1 as $n_m \to \infty$. The significance of this is that with $I_k$ denoting the number of false discoveries among the first $k$ hypotheses, i.e., the number of $j \leq k$ such that the $j$-th hypothesis is a true null and $p_j \leq c$, the FDP among those $k$ hypotheses is given by
\begin{align*}
    Q_k := \frac{I_k}{T_k\vee1} = \frac{I_k}{D_k+1} \frac{D_k+1}{T_k\vee1}.
\end{align*}

Therefore, on the same sets $A_m$,
\begin{align*}
Q_K \geq \frac{I_K}{D_K+1} \frac{1-c}{c}\alpha.
\end{align*}

Next, in Lemma 3, we show that $Q_K$ does not converge to this lower bound in probability as we let $m$ increase to $\infty$. More specifically, we show there exist some $\epsilon, \delta > 0$ and a sequence of events $C_m$ such that for all sufficiently large $m$,
\begin{enumerate}
    \item $C_m \subset A_m$.
    \item $P(C_m) > \epsilon$.
    \item On $C_m$,
    \begin{align}
        Q_K \geq \frac{I_K}{D_K+1}\frac{1-c}{c}\alpha + \delta. \label{e3}
    \end{align}
\end{enumerate} 

It follows that with $K = K_t(n_m)$ and letting $Z_m = I_K/(D_K+1)$,
\begin{align*}
Q_K &\geq (Z_m \frac{1-c}{c}\alpha + \delta)\cdot 1_{C_m} + Z_m\frac{1-c}{c}\alpha \cdot 1_{C_m^c \cap A_m} + Q_{K}\cdot 1_{A_m^c}\\
&= \delta \cdot 1_{C_m} + Z_m \frac{1-c}{c}\alpha \cdot 1_{A_m} + Q_{K}\cdot 1_{A_m^c}
\end{align*}

Taking expectations,
\begin{align}
E(Q_K) \geq \delta \epsilon + \frac{1-c}{c}\alpha E(Z_m \cdot 1_{A_m}). \label{e4}
\end{align}

Finally, in Lemma 4
, we show that 
\begin{align}
\limsup_{m \to \infty} E(Z_m \cdot 1_{A_m}) = \frac{c}{1-c}. \label{e5}
\end{align}

It follows from (\ref{e4}) and (\ref{e5}) that
\begin{align*}
\limsup_{m \to \infty}E(Q_K) \geq \delta \epsilon + \alpha > \alpha.
\end{align*}

This establishes Theorem \ref{theorem2}.
\end{proof}

\section{Discussion}

While the conditions on $\alpha$ and $c$ in Theorem \ref{theorem2} are relatively strong, many of these hold in practice. Typically $c = \frac{1}{2}$ is used, with $c < 1/2$ being uncommon. Furthermore, $c > 1/2$ is highly unusual in practice as this will result in a reduction of statistical power as true null hypothesis tests with $p_j \leq c$ become more common, resulting in the rejection threshold tending to be small. Similarly, it is highly unusual to have irrational values of $c$ or $\alpha$ in practice, thus $\frac{1-c}{c}\alpha$ is almost always rational. Loosening these assumptions is an area for future work.

Note that under the conditions of Theorem \ref{theorem2}, Theorem \ref{theorem1} suggests that using $t > 1-1/b$ controls the FDR, while using $t = 1-u/b$ with $u \geq a$ is not guaranteed to control the FDR. If we consider the case $a = 1$, then by Corollary 2 we conclude that $t = 1$ is effectively optimal even when $\alpha$ and $c$ are given. Some frequently used values that satisfy this are $c = 1/2$, $\alpha = 0.01, 0.05, 0.1 $.
If $a > 1$ there is a non-empty interval between $1-a/b$ and $1-1/b$ and hence an uncertainty about whether the FDR is always controlled for $t \in (1-a/b, 1-1/b]$. We leave this for future investigation.

Our optimality result is a theoretical one, showing that there exists an $n$ such that $SSS_t+$ with
$t = 1-u/b$ where $u \geq a$ fails to control the FDR.
However, we complemented it using Monte Carlo simulations showing that this failure
already seems to occur for moderately large values of $n$ (Figure \ref{fig12}, $n\approx 300$), which are orders of
magnitudes smaller than the number of hypotheses we encounter when analyzing a typical tandem mass spectrometry data.

\begin{figure}[h]
    \centering
\begin{tabular}{ll}
    \includegraphics[width = 6cm]{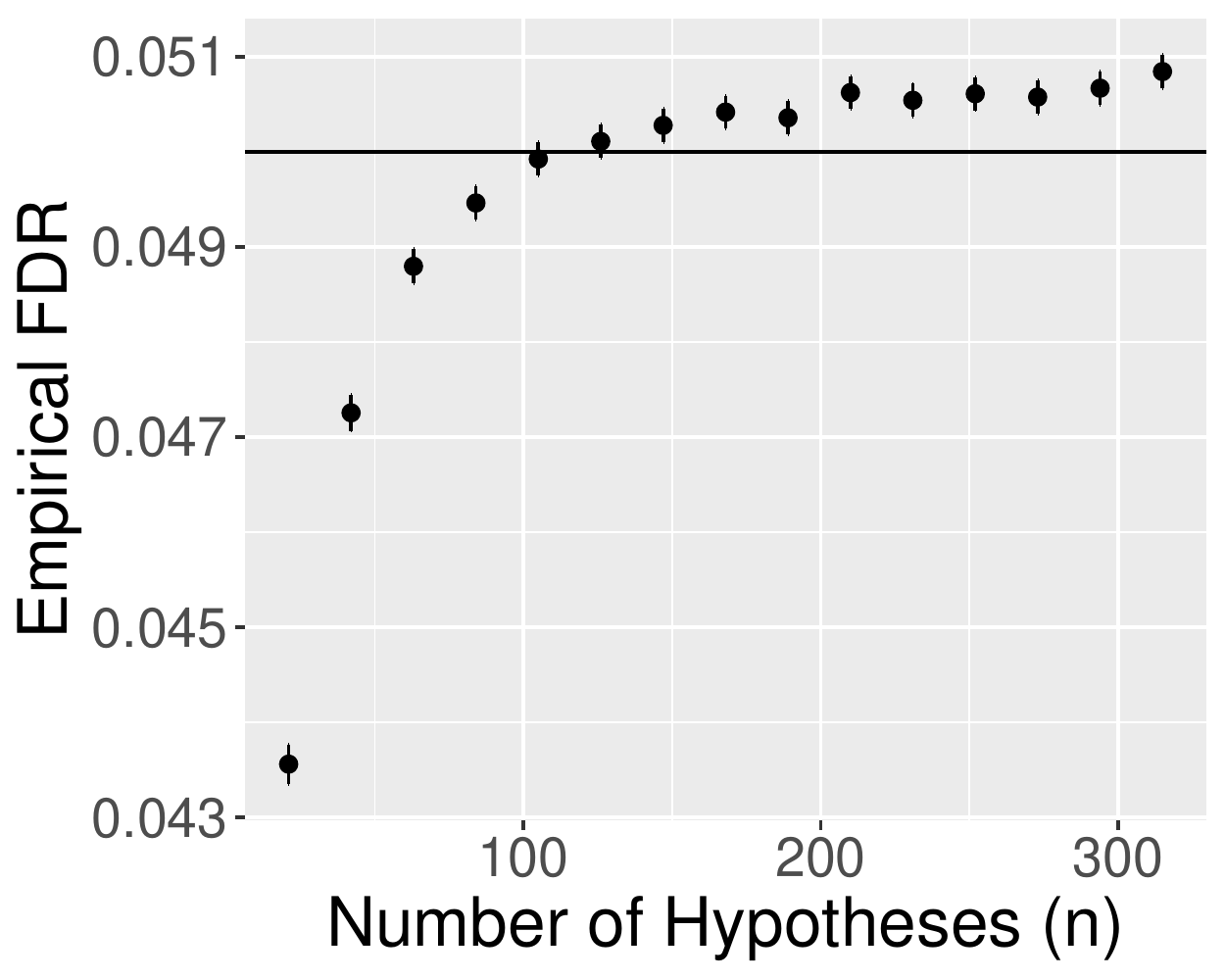} & 
    \includegraphics[width = 6cm]{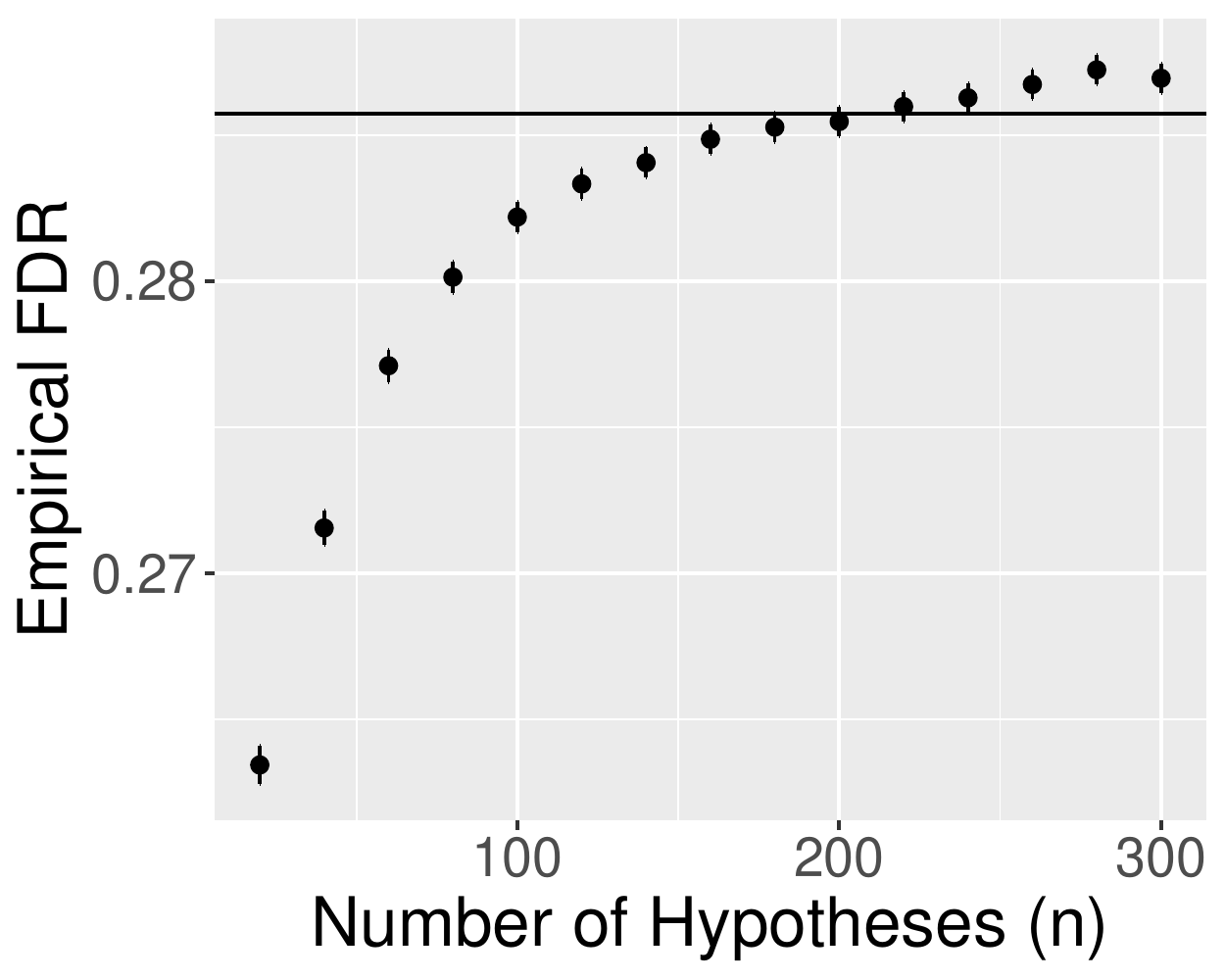}
\end{tabular}
    \caption{Empirical FDR using Monte Carlo generated data: plotted are the estimated/empirical FDR (FDP average) as well as
    99\% confidence intervals for the FDR when applying $SSS_t+$ to our construction with
    $\alpha = 0.05$, $c = 0.5$, $t = 0.95$ (left) and 
    $\alpha = 2/7$, $c = 0.4$, $t = 4/7$ (right). Number of simulations is $N = 400,000$ for each set of parameters.}
\label{fig12}
\end{figure}

Our proof of Theorem \ref{theorem2} provides a carefully constructed example that maximises a heuristic strongly associated with the FDR, and thus is among the strongest possible examples that maximise the FDR. The methodology used to produce such an example can be generalised to other applications, and is likely to be of further use in practice when investigating FDR controlling procedures related to TDC.

Finally, note that while the construction in Theorem \ref{theorem2} was designed to be optimal in the case $c = 1/2$, we do not expect it to be the case for other values of $c$. In particular, we have experimented with alternate constructions for $c < 1/2$ that allowed us to reduce the aforementioned uncertainty gap of $t \in (1-a/b, 1-1/b]$.

\newpage

\section*{Supplementary}

\begin{table}[h]
\begin{tabular}{|l|l|}
\hline
\textbf{Notation} & \textbf{Definition}                                       \\
$[n]$             & The set of positive integers less than or equal to $n$.     \\
$\max_0(A)$        & If $A$ is non-empty, the maximum element of $A$, otherwise $0$. \\
$x \vee y$        & The maximum of the set $\{x, y\}$                         \\
$c$             & A tuning parameter of SSS+ (set to 1/2 for TDC and knockoff+)\\
$\alpha$        & FDR threshold \\
$a,b$           & Natural numbers such that $\frac{1-c}{c}\alpha = \frac{a}{b}$\\
$T_k$           & The number of top $k$ hypotheses for which the p-value is $\le c$ (``target wins'')\\
$D_k$           & Same as $T_k$ but the p-value is $> c$ (``decoy wins'')\\
$I_k$             & The number of false discoveries in the $k$ top scoring hypotheses  \\
$N_k$           & The number of true nulls in the $k$ top scoring hypotheses  \\
$K=K_t$         & Discovery cutoff: all target wins in top $K$ hypotheses are reported (rejected nulls)\\
$L$             & The indices of the true nulls in our construction\\
\hline
\end{tabular}
\caption{Notations we use}
\end{table}

\section*{Lemmas used in proving Theorem 2 of the main text}

\begin{lemma}
\label{lemma1}
Suppose Assumption 1 holds and consider our above construction with $t = 1-u/b$ where $u \in \mathbb{Z}, u \geq a$ and $K = K_t$ as in (3) of the main text
. If $K < n$, then
\begin{align*}
\frac{D_K + 1}{T_K\vee 1} \geq \frac{1-c}{c}\alpha = \frac{a}{b}.
\end{align*}
\end{lemma}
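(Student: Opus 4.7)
The plan is to combine the two defining inequalities of $K$ with the periodic structure of the construction. By maximality of $K<n$, $(D_{K+1}+t)/(T_{K+1}\vee 1) > a/b$ while $(D_K+t)/(T_K\vee 1)\le a/b$. A target win at position $K+1$ would satisfy $T_{K+1}=T_K+1$ and $D_{K+1}=D_K$, which cannot increase the ratio; hence $p_{K+1}>c$. Because the construction assigns $p_j=c$ to every false null, this forces $K+1\in L$.

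I would then parametrize using the cycle structure. Enumerate $L\cap\{1,\dots,a+b\}$ as $\ell_1<\cdots<\ell_{2a+1}$ and write $K+1=m(a+b)+\ell_i$ for unique $m\ge 0$ and $i\in\{1,\dots,2a+1\}$. Among the first $K$ hypotheses there are $m(b-a-1)+(\ell_i-i)$ false nulls (all target wins) and $m(2a+1)+(i-1)$ true nulls; writing $S$ for the number of target wins among these true nulls gives closed forms for $T_K$ and $D_K$. Substituting $t=1-u/b$ into the inequalities at $K$ and $K+1$ and rearranging reduces them to
\[
X+b-u \;\le\; (a+b)S' \;<\; X+2b-u,
\]
where $S'=S-m(a+1)$ and $X=b(i-1)-a(\ell_i-i)=(a+b)i-a\ell_i-b$. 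The desired conclusion $b(D_K+1)\ge a T_K$ translates to $(a+b)S'\le X+b$, so the lemma reduces to showing that every multiple of $a+b$ in $[X+b-u,X+2b-u)$ actually lies in $[X+b-u,X+b]$.

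This interval has length $b<a+b$ and so contains at most one multiple of $a+b$; it remains to exclude a multiple from the open subinterval $(X+b,X+2b-u)$. Here the description of $L$ via $\ell_i\equiv -j_0 a^{-1}\pmod{a+b}$ for some $j_0\in\{0,1,\dots,2a\}$ is decisive: a short modular computation gives $X+b\equiv j_0\pmod{a+b}$, so the next multiple of $a+b$ beyond $X+b$ sits at distance $a+b-j_0$ (or $a+b$ if $j_0=0$). For it to land in a subinterval of length $b-u$ one would need $j_0>a+u$, which is impossible because $u\ge a$ while $j_0\le 2a$. The degenerate case $T_K=0$ is handled directly since then $D_K+1\ge 1>a/b$.

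The main obstacle is this last step: it is precisely the combination of the algebraic description of $L$ via $a^{-1}$, the bound $|L\cap\{1,\dots,a+b\}|=2a+1$, and the hypothesis $u\ge a$ that makes the estimate $j_0\le 2a<a+u+1$ tight, and I would want to verify the boundary cases $j_0=0$ and $u=a$ with particular care to make sure the strict versus non-strict inequalities line up correctly.
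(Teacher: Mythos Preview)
Your proposal is correct and follows essentially the same approach as the paper: both deduce $K+1\in L$ from maximality, then use the modular description of $L$ together with $u\ge a$ to pin the integer solution into the desired range. Your parametrization via $(m,\ell_i,S')$ and the interval $[X+b-u,\,X+2b-u)$ is a repackaging of the paper's computation with $T_K$ and the residue of $b(K+2)-u\pmod{a+b}$; if anything your treatment of the boundary $u=a$, $j_0=2a$ is slightly cleaner, since the half-open interval automatically excludes that endpoint, whereas the paper disposes of it by a separate case analysis showing no integer $T_K$ is compatible.
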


\begin{proof}
Note that if $T_K = 0$, then $D_K = K$ and the lemma is obvious using the assumption that $a/b < 1$. Now consider the case $T_K > 0$ and thus $0 < K < n$. We claim that $p_{K + 1} > c$ and thus $K + 1 \in L$. To see this, we use the definition of $K$ as a maximum to give
\begin{align}
\frac{D_K + t}{T_K \vee 1} \leq \frac{a}{b} \quad \text{ and } \quad \frac{D_{K+1} + t}{T_{K+1}\vee 1} > \frac{a}{b}. \label{Kmaxineq}
\end{align}
This immediately implies $D_{K+1} = D_K + 1$ and $T_{K+1} = T_K$, thus by definition of $D_k$ we must have $p_{K+1} > c$. As only true nulls have a p-value larger than $c$ by our construction, the $K+1$-th hypothesis has to be a true null, so $K+1 \in L$.

Using the fact that $T_K + D_K = K$ and $T_K > 0$, we must have
\begin{align*}
\frac{K - T_K + 1 - u/b}{T_K} \leq \frac{a}{b} \quad \text{ and } \quad \frac{K+1 - T_{K+1} + 1 - u/b}{T_{K+1}} > \frac{a}{b}.
\end{align*}
Rearranging these inequalities using $T_K = T_{K+1}$ gives
\begin{align}
\frac{b(K+1)-u}{a+b} \leq T_K \quad \text{ and } \quad \frac{b(K+2)-u}{a+b} > T_K. \label{KTKineq}
\end{align}

To finish the proof we show below that 
\begin{align}
\label{ineq13}
\frac{b(K+1)}{a+b} \geq T_K. 
\end{align}
Rearranging the terms of the last inequality it follows that
\[
\frac{K+1}{T_K} \geq \frac{a}{b}+1,
\]
and hence using $K = T_K + D_K$ we have
\begin{align*}
\frac{D_K + 1}{T_K} \geq \frac{a}{b}.
\end{align*}
Finally, recall we assumed that $T_K > 0$ for this case, so we may replace $T_K$ with $T_K \vee 1$ establishing the lemma in this case as well.

Returning to \eqref{ineq13}, note first that if $u \geq b$ then the second inequality from (\ref{KTKineq}) immediately establishes \eqref{ineq13}.
Otherwise, $a \leq u \leq b-1$, and as $K+1 \in L$,
\begin{align}
K + 2 \equiv 1 + jb^{-1} \pmod{a+b} \text{ for some } j \in \{0, 1, ..., 2a\}, \label{e20}
\end{align}
or equivalently,
\begin{align}
b(K+2)-u \equiv b-u+j \pmod{a+b} \text{ for some } j \in \{0, 1, ..., 2a\}. \label{e12}
\end{align}
As $j$ varies from 0 to $2a$, the above RHS consists of
\begin{align}
\label{seq_ineq}
0 < b-u < b-u+1 < b-u+2 < ... < b-u+2a.
\end{align}

If we now further assume that $u > a$, then $b-u+2a \leq a+b-1$ and it follows from \eqref{e12} and \eqref{seq_ineq} that there exists an integer $p$ such that $p(a+b) + b-u \leq b(K+2)-u < (p+1)(a+b)$.

Otherwise, $u=a$ in which case we claim that the largest term in \eqref{seq_ineq} is not attainable.
Indeed, that term corresponds to $j=2a$ in \eqref{e20} and \eqref{e12}, so it is only attainable if
$K+2 \equiv 1+2ab^{-1} \pmod{a+b}$. However, noting that
 $ab^{-1} \equiv -1 \pmod{a+b}$, we find in this case that $K \equiv -3 \pmod{a+b}$.
Write $K = p(a+b)-3$ for some positive integer $p$, then (\ref{KTKineq}) gives
\begin{align*}
bp-1-\frac{b}{a+b} \leq T_K \quad \text{ and } \quad bp-1 > T_K.
\end{align*}
As $bp-1$ is an integer and $b/(a+b) < 1$, there cannot be an integer solution for $T_K$ which
satisfies both of these inequalities. 

As the largest term in \eqref{seq_ineq} is not attainable when $u=a$, and the second largest term is smaller than $a+b$ again we find from \eqref{e12} and \eqref{seq_ineq} that there exists an integer $p$ such that $p(a+b) + b-u \leq b(K+2)-u < (p+1)(a+b)$.

Hence, considering the possible values of $b(K+2)-u$ in $\mathbb{Z}_{a+b}$ and projecting elements into $\mathbb{R}/\mathbb{Z}$,
\begin{align*}
\frac{b(K+2) - u}{a+b} \pmod{1} \in \Big[\frac{b-u}{a+b}, 1\Big) \subset \mathbb{R}/\mathbb{Z}.
\end{align*}

This expression is the same as that of (\ref{KTKineq}), thus by considering that $T_K$ is an integer, we may improve the upper bound for $T_K$ from (\ref{KTKineq}):
\begin{align*}
\frac{b(K+1)}{a+b} = \frac{b(K+2)-u}{a+b} - \frac{b-u}{a+b} \geq T_K ,
\end{align*}
thus establishing \eqref{ineq13} in this case as well.
\end{proof}

In Lemma \ref{lemma2} below, we show that the probability the condition of Lemma \ref{lemma1}, $K < n$, holds increases to 1 along the subsequence $n_m = m(a+b)$. For the remainder of the proofs, we assume $n$ is of this form and make the following observation.

Let $B_k = \{k(a+b)+j : j \in [a+b] = \{1, 2, ..., a+b\}\}$. Under the assumptions of Lemma \ref{lemma1}, we claim $|B_k \cap L| = 2a+1$. As the $B_k$s form an $(a+b)$-periodic partition of the positive integers and $L$ is defined periodically with the same period, it suffices to show this holds for $k = 0$. Clearly,
\begin{align*}
B_0 \cap L = \big\{\langle kb^{-1} \rangle : k \in \{0, 1, ..., 2a\}\big\},
\end{align*}
where $\langle kb^{-1}\rangle$ is the unique integer between 1 and $a+b$ such that $\langle kb^{-1} \rangle \equiv kb^{-1} \pmod{a+b}$. Suppose $\langle k_1b^{-1}\rangle = \langle k_2b^{-1}\rangle$ for $k_1, k_2 \in 
\{0, 1, ..., 2a\}$. Then $k_1b^{-1} \equiv k_2b^{-1} \pmod{a+b}$ and thus $k_1 \equiv k_2 \pmod{a+b}$. Since $2a \leq a+b-1$, then by our restrictions on $k_1$ and $k_2$, we must have $k_1 = k_2$. It follows that $B_0 \cap L$ consists of $2a+1$ distinct elements.

\begin{lemma}
\label{lemma2}
Let $n = n_m = m(a+b)$ with $m \in \mathbb{N}$ and let $K = K_m(t)$. Under Assumptions 1 and 2, $P(K = n) \to 0$ as $m \to \infty$.
\end{lemma}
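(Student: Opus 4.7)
The plan is to reduce the event $\{K = n\}$ to a concrete inequality on the counts $D_n$ and $T_n$, and then apply the strong law of large numbers. By the definition of $K_t$ as the maximum $k \in [n]$ satisfying $(D_k + t)/(T_k \vee 1) \le a/b$, the event $\{K = n\}$ coincides with the event $\{(D_n + t)/(T_n \vee 1) \le a/b\}$. It therefore suffices to show that the probability of the latter event tends to $0$ as $m \to \infty$.

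I would next make $D_n$ and $T_n$ explicit under the construction. Over $m$ complete cycles, the observation preceding the lemma yields $2a+1$ true nulls per cycle, hence $(a+b)-(2a+1) = b-a-1$ false nulls per cycle, giving $m(2a+1)$ true nulls and $m(b-a-1)$ false nulls in total. Since the construction sets $p_j = c$ deterministically for every false null, false nulls contribute exactly $m(b-a-1)$ target wins and no decoy wins. Letting $X_m$ denote the number of true nulls with $p_j \le c$, Assumption 2 gives $X_m \sim \text{Binomial}(m(2a+1), c)$, and
\begin{align*}
D_n = m(2a+1) - X_m, \qquad T_n = m(b-a-1) + X_m.
\end{align*}

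By the strong law of large numbers, $X_m/m \to (2a+1)c$ almost surely. Noting that the limit of $T_n/m$ is strictly positive (either $b-a-1 > 0$, or $b = a+1$ in which case $(2a+1)c > 0$), so that $T_n \vee 1 = T_n$ eventually, I would then conclude
\begin{align*}
\frac{D_n + t}{T_n \vee 1} \xrightarrow{\text{a.s.}} \frac{(2a+1)(1-c)}{(b-a-1) + (2a+1)c}.
\end{align*}
The final step is to verify that this limit strictly exceeds $a/b$. Clearing denominators, this reduces to showing $(a+b)\bigl[(a+1) - (2a+1)c\bigr] > 0$, which is immediate from Assumption 1 since $c \le 1/2$ gives $(2a+1)c \le a + 1/2 < a+1$. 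Because the almost-sure limit is strictly greater than $a/b$, the event $\{(D_n+t)/(T_n \vee 1) \le a/b\}$ has probability tending to $0$, yielding the lemma.

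The main (and essentially only) subtle point is the strict inequality at the last step: this is exactly where the hypothesis $c \le 1/2$ in Assumption 1 is used. Without it the limiting ratio could equal $a/b$ and a finer central-limit-type tail argument would be required, but under the stated assumption the strong law alone is enough.
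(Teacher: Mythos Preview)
Your proof is correct and follows essentially the same route as the paper: reduce $\{K=n\}$ to the event $\{(D_n+t)/(T_n\vee 1)\le a/b\}$, compute the almost-sure limit of the ratio via the strong law of large numbers using that $D_n$ is binomial with parameters $(m(2a+1),1-c)$, and verify the limit strictly exceeds $a/b$ using $c\le 1/2$. The only cosmetic difference is that the paper bounds $T_n\vee 1$ above by $T_n+1$ instead of arguing that $T_n\ge 1$ eventually, but the substance is identical.
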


\begin{proof}
Let $X_m$ be defined as
\begin{align*}
X_m = \frac{D_n + t}{T_n + 1} = \frac{D_n + t}{n+1-D_n} = \frac{D_n/k_m + t/k_m}{(n+1)/k_m - D_n/k_m},
\end{align*}
where $k_m = m(2a+1)$. By our previous observation, $D_n \sim$ binomial($(2a+1)m, 1-c$), hence by the strong law of large numbers, $X_m$ converges almost surely
\begin{align*}
X_m \underset{m \to \infty}{\to} \lambda := \frac{1-c}{(a+b)/(2a+1) - (1-c)}.
\end{align*}

As $1-c \geq 1/2$, we find
\begin{align*}
\lambda \geq \frac{1/2}{(a+b)/(2a+1) - 1/2} = \frac{a+1/2}{b-1/2} > \frac{a}{b}.
\end{align*}

By using the definition of $K$ as a maximum and the fact that $T_n \vee 1 < 1 + T_n$, we find
\begin{align*}
P(K < n) = P\Big(\frac{D_n + t}{ T_n \vee 1} > \frac{1-c}{c}\alpha\Big) \geq P\Big(X_m > \frac{1-c}{c}\alpha\Big).
\end{align*}

As $X_m$ converges to $\lambda > \frac{1-c}{c}\alpha$ almost surely, it follows $P(K < n) \to 1$ as $m \to \infty$, which proves the lemma.
\end{proof}

\begin{lemma}
\label{lemma3}
There exists $\epsilon, \delta > 0$ and a sequence of sets $C_m$ such that for all sufficiently large $m$:
\begin{enumerate}
    \item $C_m \subset A_m$.
    \item $P(C_m) > \epsilon$.
    \item On $C_m$,
    \begin{align}
        Q_K \geq \frac{I_K}{D_K+1}\frac{1-c}{c}\alpha + \delta. \label{e3a}
    \end{align}
\end{enumerate}
\end{lemma}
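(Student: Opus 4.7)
The plan is to exhibit a single deterministic integer $K^{*}\in[1,a+b-2]$ such that $C_m := \{K_t(n_m)=K^{*}\}$ satisfies $C_m\subset A_m$, $P(C_m)\ge\epsilon$ uniformly for all large $m$, and the strict inequality \eqref{e3a} holds deterministically on $C_m$ with a fixed constant $\delta>0$.

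Starting from the proof of Lemma 1, I would first sharpen its key estimate. On $A_m$, $K+1\in L$, so the integer $j(K)\in\{0,1,\dots,2a\}$ defined as the representative of $b(K+1)\bmod(a+b)$ is well defined. Because $T_K$ is the unique integer in the half-open interval $\bigl[\tfrac{b(K+1)-u}{a+b},\tfrac{b(K+2)-u}{a+b}\bigr)$ of length $b/(a+b)<1$, a direct calculation in the case $j(K)\le u$ (automatic when $j(K)=1$, as $u\ge a\ge 1$) yields
\[
T_K=\tfrac{b(K+1)-j(K)}{a+b},\qquad D_K+1=\tfrac{a(K+1)+j(K)}{a+b},\qquad \tfrac{D_K+1}{T_K}-\tfrac{a}{b}=\tfrac{j(K)}{b\,T_K}.
\]
Substituting into the algebraic identity $Q_K-\tfrac{I_K}{D_K+1}\tfrac{a}{b}=\tfrac{I_K}{T_K}\bigl(\tfrac{D_K+1}{T_K}-\tfrac{a}{b}\bigr)\tfrac{T_K}{D_K+1}$ gives the exact formula
\[
Q_K-\tfrac{I_K}{D_K+1}\cdot\tfrac{a}{b}=\tfrac{I_K\,j(K)}{b\,T_K(D_K+1)},
\]
which is strictly positive precisely when $j(K)\ge 1$ and $I_K\ge 1$.

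I would then take $K^{*}+1$ to be the representative of $b^{-1}\bmod(a+b)$ in $[2,a+b-1]$, so that $K^{*}+1\in L$ with $j(K^{*})=1$ and $K^{*}\le a+b-2<n_m$ for every $m\ge 1$. On $C_m$ the squeeze from the previous paragraph uniquely pins down the triple $(T_{K^{*}},D_{K^{*}},I_{K^{*}})$; a short direct computation using the periodic structure of $L$---and the fact that $K^{*}$ itself lies in $L$ (corresponding to $j=a+1\le 2a$)---confirms that $I_{K^{*}}\ge 1$. Setting $\delta := I_{K^{*}}/\bigl(b\,T_{K^{*}}(D_{K^{*}}+1)\bigr)>0$, a fixed positive rational depending only on $a$ and $b$, the identity above becomes exactly the inequality \eqref{e3a} on $C_m$, and $C_m\subset A_m$ holds by construction.

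The main obstacle is proving $P(C_m)\ge\epsilon$ for some fixed $\epsilon>0$ and all large $m$. I would decompose $P(C_m)=P(\mathcal{P})\cdot P(\mathcal{N}_m\mid\mathcal{P})$, where $\mathcal{P}$ is ``pass at $K^{*}$ with the prescribed $T_{K^{*}}$'' and $\mathcal{N}_m$ is ``no pass at any $\ell\in(K^{*},n_m]$''. Since $\mathcal{P}$ constrains only the finitely many true null p-values in $[1,K^{*}]$ to a specific number of target wins, $P(\mathcal{P})$ is a fixed positive rational independent of $m$. For $P(\mathcal{N}_m\mid\mathcal{P})$ I would reformulate via the random walk $V_\ell:=(a+b)D_\ell-a\ell$: passing at $\ell$ is exactly $V_\ell\le u-b$, and on $\mathcal{P}$ one computes $V_{K^{*}}=1-b$. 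Grouping the post-$K^{*}$ increments into complete cycles of length $a+b$, the cycle sums are i.i.d.\ with distribution $(a+b)(a+1-X)$ for $X\sim\text{Binomial}(2a+1,c)$ and positive common mean $(a+b)\bigl((a+1)-(2a+1)c\bigr)>0$ under Assumption 1 ($c\le 1/2$). By the classical escape theorem for positive-drift i.i.d.\ random walks, combined with a within-cycle amplitude bound of $(a+b)a$ on how far $V$ can drop inside a single cycle and an explicit positive-probability choice of the first cycle's true null outcomes that keeps $V>u-b$ throughout that cycle, the infinite-horizon event $\mathcal{N}_\infty:=\bigcap_m\mathcal{N}_m$ satisfies $P(\mathcal{N}_\infty\mid\mathcal{P})>0$. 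Since $\mathcal{N}_\infty\subset\mathcal{N}_m$, this gives $P(\mathcal{N}_m\mid\mathcal{P})\ge P(\mathcal{N}_\infty\mid\mathcal{P})>0$ for every $m$, completing the lower bound. The technical core of the argument will be this escape step, in which the classical no-return probability for positive-drift i.i.d.\ walks is combined with careful bookkeeping of the walk's within-cycle fluctuations to promote cycle-end control to everywhere control.
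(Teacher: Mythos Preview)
Your strategy---fixing a single deterministic index $K^{*}$ and taking $C_m=\{K_t(n_m)=K^{*}\}$---is a genuinely different route from the paper's. The paper does not pin $K$ to one value; instead it defines $C_m$ by four explicit p-value constraints (all target wins in the first period; target wins at positions $\equiv 0\pmod{a+b}$ up to some $k_0$; decoy wins at the remaining true nulls up to $k_0$; a drift condition $D_k\ge\gamma N_k$ beyond $k_0$). These force $a+b\le K<k_0$, and the strict inequality $(D_K+1)/T_K>a/b$ then comes from re-running the Lemma~1 residue analysis with the additional information that $K+1\not\equiv 0\pmod{a+b}$ (so $j\ge1$ in the analogue of your $j(K)$). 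The paper extracts $\delta$ only by finiteness of the possible triples $(I_K,D_K,T_K)$, whereas you compute it exactly. Both approaches finish with essentially the same positive-drift escape argument (the paper isolates this as a separate lemma). Your version is more explicit and arguably cleaner when it works.

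There are, however, two points where your outline has real gaps. First, your formula $T_{K^*}=\tfrac{b(K^*+1)-1}{a+b}$ is derived from the squeeze $T_{K^*}\in\bigl[\tfrac{b(K^*+1)-u}{a+b},\tfrac{b(K^*+2)-u}{a+b}\bigr)$, but the upper endpoint check requires $u\le b$. For $u>b$ (allowed by Theorem~2, which assumes only $u\ge a$) the half-open interval of length $b/(a+b)$ may contain \emph{no} integer at all---e.g.\ with $a=1$, $b=10$, $u=11$ one gets $[89/11,9)$---so $\{K=K^*\}$ is empty and $P(C_m)=0$. The paper avoids this by treating $u\ge b$ via the trivial bound $T_K<\tfrac{b(K+2)-u}{a+b}\le\tfrac{b(K+1)}{a+b}$, which already gives the needed strict inequality; your single-$K^*$ scheme would need a separate choice or argument in that regime. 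Second, the claim ``$K^{*}\in L$ confirms $I_{K^{*}}\ge1$'' is not a valid inference: on $C_m$ the value $I_{K^*}=N_{K^*}-D_{K^*}$ is determined by the prescribed $D_{K^*}$ and the deterministic count $N_{K^*}$ of true nulls in $[1,K^*]$, and whether the particular true null at position $K^*$ is a target or decoy win is irrelevant. You need the combinatorial inequality $N_{K^*}>D_{K^*}$ (equivalently $T_{K^*}>F_{K^*}$), and this---together with the companion achievability bound $D_{K^*}\le N_{K^*}$ ensuring $P(\mathcal P)>0$---requires an actual count of $L\cap[1,K^*]$ rather than the mere observation that $K^*\in L$.
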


\begin{proof}
Let $N_k$ denote the number of true null hypotheses at or before index $k$. We claim that there exists a $k_0 \in \mathbb{N}$ such that if $n \geq k_0$ and $D_k \geq (a+1/4)/(2a+1)N_k$ for all $k \geq k_0$, then $K < k_0$. To show this, observe that $N_{m(a+b)} = m(2a+1)$, thus we must have
\begin{align*}
N_k &\geq N_{\lfloor k/(a+b)\rfloor (a+b)}\\
&= \big\lfloor \frac{k}{a+b}\big\rfloor (2a+1) > \big(\frac{k}{a+b} - 1\big)(2a+1).
\end{align*}

Therefore assuming $D_k \geq (a+1/4)/(2a+1)N_k$ for all $k \geq k_0$, with
\begin{align*}
d = d(k) = \frac{a+1/4}{2a+1}\big(\frac{k}{a+b} - 1\big)(2a+1) = k\big(a + \frac{1}{4}\big)\big(\frac{1}{a+b}-\frac{1}{k}\big),
\end{align*}
we have $D_k > d$ and $T_k = k-D_k < k-d$. If $T_k = 0$ for some value of $k \geq 1$, then $D_k = k \geq 1$ and thus $(D_k+t)/(T_k \vee 1) > 1 \geq a/b$. Otherwise, 
\begin{align*}
\frac{D_k+t}{T_k \vee 1} &= \frac{D_k+t}{T_k} > \frac{d+t}{k-d} = \frac{d/k + t/k}{1-d/k}.
\end{align*}

As $k \to \infty$ this lower bound will converge to 
\begin{align*}
\frac{(a+1/4)\frac{1}{a+b}}{1-(a+1/4)\frac{1}{a+b}} = \frac{a+1/4}{a+b-a-1/4} > \frac{a}{b}.
\end{align*}

It immediately follows that there exists a $k_0 \in \mathbb{N}$ such that if $D_k \geq (a+1/4)/(2a+1)N_k$ for all $k \geq k_0$, then $(D_k+t)/(T_k \vee 1) > a/b = \frac{1-c}{c}\alpha$ for all $k \geq k_0$ and thus $K < k_0$.

Note that we may assume $k_0 = m_0(a+b)$ for some $m_0 \in \mathbb{N}$ where $m_0 \geq 3$. Continuing with the proof of the lemma, let $C_m$ denote the event defined by:
\begin{enumerate}[(I)]
    \item If $k \leq a+b, p_k \leq c$ (so, using TDC terminology, all the hypotheses in the first period of the construction are target wins).
    \item If $k \equiv 0 \pmod{a+b}$ and $k \leq k_0$ then $p_k \leq c$.
    \item If $k \in (a+b, k_0) \cap L$ and $k \not\equiv 0 \pmod{a+b}$, then $p_k > c$.
    \item If $k \in [k_0, n]$, the $p_k$ are such that $D_k \geq (a+1/4)/(2a+1)N_k$ for all $k \in [k_0, n]$.
\end{enumerate}

Note that it is not clear that (IV) is achievable from our construction while obeying (I) - (III), however this indeed is the case as we will show towards the end of the proof.

It follows from (IV) and our previous analysis that $K < k_0$, establishing $C_m \subset A_m$, which is statement 1 of the lemma. Furthermore by (I), if we consider $k = a+b$,
\begin{align*}
\frac{D_{k} + t}{T_{k} \vee 1} = \frac{1-u/b}{a+b} \leq \frac{a}{b}.
\end{align*}
Since $K$ by definition is the maximum value of $k$ such that the above inequality is satisfied, It follows that $K$ is at least $a + b$.

In proving Lemma \ref{lemma1}, we noted that when $0 < K < n$ we must have $p_{K+1} > c$ and $K+1 \in L$. Moreover, combining the observation that $p_{K+1} > c$ with $K < k_0$ and (II), we further note that $K+1 \not\equiv 0 \pmod{a+b}$. Hence, analogously to (\ref{e20}) and (\ref{e12}),
\begin{align*}
K + 2 \equiv 1 + jb^{-1} \pmod{a+b} \text{ for some } j \in \{1, 2, ..., 2a\},
\end{align*}
or equivalently,
\begin{align*}
b(K+2) - u \equiv b-u + j \pmod{a+b} \text{ for some } j \in \{1, 2, ..., 2a\}.
\end{align*}

Continuing along the same path of logic from Lemma \ref{lemma1} and using the fact that $j \neq 2a$ when $u = a$ implies
\begin{align*}
\frac{b(K+2) - u}{a+b} \pmod{1} \in \Big[\frac{b-u+1}{a+b}, 1\Big) \subset \mathbb{R}/\mathbb{Z},
\end{align*}
and it follows that
\begin{align*}
\frac{b(K+1)}{a+b} > \frac{b(K+2)-u}{a+b}-\frac{b-u+1}{a+b} \geq \Big\lfloor \frac{b(K+2)-u}{a+b}\Big\rfloor.
\end{align*}

Using (\ref{KTKineq}) again, we have
\begin{align*}
\frac{b(K+1)}{a+b} > T_K,
\end{align*}
and rearranging this while using $T_K + D_K = K$, $K > 0$ and $p_1 \leq c$ gives
\begin{align*}
\frac{D_K+1}{T_K \vee 1} = \frac{D_K+1}{T_K} > \frac{a}{b}.
\end{align*}

As $K \geq a+b$, $I_K \geq 2a+1 > 0$ and it follows that
\begin{align*}
\frac{I_K}{T_K \vee 1} > \frac{1-c}{c}\alpha \frac{I_K}{D_K+1}.
\end{align*}

Since $K < k_0$ and $k_0$ is independent of $n$, there are finitely many values $I_K, D_K$ and $T_K$ can take, hence there exists a $\delta > 0$ such that 
\begin{align*}
\frac{I_K}{T_K \vee 1} > \frac{1-c}{c}\alpha \frac{I_K}{D_K+1} + \delta,
\end{align*}
thus establishing statement 3 of the lemma.

Finally, to prove statement 2 holds, recall that $k_0 = m_0(a+b)$ with $m_0 \geq 3$. In particular, $m_0 \geq 3 > 2a/(a-1/4)$ and therefore $2a(m_0-1) > (a+1/4)m_0$. By the definition of $C_m$, $D_{k_0} = 2a(m_0-1)$ and by construction $N_{k_0} = m_0(2a+1)$, hence
\begin{align*}
D_{k_0} > (a+1/4)m_0 = (a+1/4)\frac{N_{k_0}}{2a+1} = \gamma N_{k_0},
\end{align*}
where $\gamma = (a+1/4)/(2a+1)$.

The last inequality guarantees that (IV) is achievable with $n = k_0 = m_0(a+b)$ and therefore $P(C_{m_0}) = (1-c)^{2a(m_0-1)}c^{a(2+m_0)} > 0$. We finish the proof by showing that $\inf_{m \geq m_0} P(C_m | C_{m_0}) > 0$.

Indeed, consider extending our construction to an infinite sequence of hypotheses, and let $C_{\infty} = \{D_k \geq \gamma N_k \text{ for all } k \geq k_0\} \cap C_{m_0}$. Clearly $C_\infty \subset C_m$ for all $m \geq m_0$ hence it suffices to show that $P(C_\infty | C_{m_0}) > 0$.

Note that
\begin{align*}
P(C_\infty | C_{m_0}) = P\big[D_k \geq \gamma N_k, \quad \forall k \geq k_0 | D_{k_0} = 2a(m_0+1)\big],
\end{align*}
and because $D_k$ is a sum of $N_k$ i.i.d. Bernoulli($1-c$) random variables, the following lemma applied with $p = 1-c \geq 1/2 > \gamma$ and $s_0 = D_{k_0} = 2a(m_0+1) \geq \gamma N_{k_0}$ completes the proof.

\end{proof}

\begin{customthm}{5}
Consider an infinite sequence of i.i.d. Bernoulli($p$) random variables, and let $S_l$ denote the sum of the first $l$ random variables. Fix $\gamma \in (0, p)$ and $l_0 \in \mathbb{N}$. Then, with $s_0 \geq \gamma l_0$,
\begin{align*}
P(S_l \geq \gamma l \text{ for all } l \geq l_0 | S_{l_0} = s_0) > 0.
\end{align*}
\end{customthm}

\begin{proof}
Consider the sequence of events $B_n = \{S_l \geq \gamma l \text{ for all } l \geq n\}$ and the event $B = \{S_{l_0} = s_0\}$. We wish to show $P(B_{l_0} | B) > 0$. As $B_n$ is a sequence of increasing events,
\begin{align*}
\lim_{n \to \infty} P(B_n | B) &= P(\cup_{n=l_0}^\infty B_n | B)\\
&\geq P(\liminf_{n \to \infty} \frac{S_n}{n} \geq \gamma' | B),
\end{align*}
where $\gamma' \in (\gamma, p)$.

For $n > l_0$, $S_n - S_{l_0}$ is a Binomial($n-l_0, p$) random variable, hence applying the strong law of large numbers, we find that even conditional on $B$, $S_n/n$ converges in probability to $p$. As $\gamma' < p$, we conclude
\begin{align*}
\lim_{n \to \infty} P(B_n | B) = 1.
\end{align*}

It follows there exists some $n_0 \geq l_0$ such that $P(B_{n_0} | B) > 0$. By the law of total probability and the clear independence of $B$ and the Bernoulli random variables after index $l_0$,
\begin{align*}
P(B_{n_0} |B) = \sum_{n = 0}^{n_0-l_0} P(B_{n_0} | S_{n_0} - S_{l_0} = n, B)P(S_{n_0} - S_{l_0} = n) > 0.
\end{align*}

Thus, at least one term in the sum must be positive, and as $P(B_{n_0} | S_{n_0} - S_{l_0} = n, B)$ is increasing over $n$ with $P(S_{n_0} - S_{l_0} = n)$ never being zero over the summation index, the final term of the sum must is positive. That is, $P(B_{n_0}, S_{n_0}-S_{l_0} = n_0-l_0 | B) > 0$, and therefore with $A = B_{n_0} \cap B \cap \{S_{n_0} - S_{l_0} = n_0-l_0\}$, $P(A) > 0$.

However, the event $A$ is a subset of $B_{l_0} \cap B$: indeed, if $S_{n_0} - S_{l_0} = n_0-l_0$ then for $l_0 \leq l < n_0$, $S_l - S_{l_0} = l-l_0$ and therefore if $A$ occurs than for $l \in [l_0, n_0]$,
\begin{align*}
\frac{S_l}{l} &= \frac{S_l - S_{l_0}}{l} + \frac{S_{l_0}}{l}\\
&\geq \frac{l-l_0}{l} + \gamma \frac{l_0}{l} > \gamma,
\end{align*}
showing that $B_{l_0} \cap B$ also occurs. It follows that $P(B_{l_0} \cap B) > 0$ and we conclude $P(B_{l_0} | B) > 0$ as required.
\end{proof}

\begin{lemma}
\label{lemma4}
Suppose Assumptions 1 and 2 hold and let $A_m = \{K < n\}$ where $n = n_m = m(a+b)$ and $K = K_t(n)$. Then
\begin{align*}
\limsup_{m \to \infty} E\Big(\frac{I_K}{D_K+1} \cdot 1_{A_m}\Big) = \frac{c}{1-c}
\end{align*}
\end{lemma}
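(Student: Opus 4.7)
The plan is to recognize $Z_m = I_K/(D_K+1)$ as the value of a reverse supermartingale stopped at $N_K$, perform a Doob-type decomposition into a reverse martingale plus a predictable increasing part, and reduce the computation to a simple combinatorial observation about the positions of true nulls in our construction.

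Let $\xi_i \in \{0,1\}$ be the target-win indicator of the $i$-th true null (in order of position); by Assumption 2 these are i.i.d.\ Bernoulli($c$). Writing $S_j := \xi_1 + \cdots + \xi_j$ and using that every false null has $p_j = c$, we have $I_K = S_{N_K}$ and $D_K = N_K - S_{N_K}$, so $Z_m = \tilde M_{N_K}$ where $\tilde M_j := S_j/(j - S_j + 1)$. With the decreasing filtration $\mathcal{G}_j := \sigma(S_j, \xi_{j+1}, \xi_{j+2}, \ldots)$, a direct conditional computation (splitting on whether $\xi_{j+1} = 0$ or $1$) gives $E(\tilde M_j \mid \mathcal{G}_{j+1}) = \tilde M_{j+1} - 1_{\{S_{j+1} = j+1\}}$, so $\tilde M$ is a reverse supermartingale. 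Setting $A_j := \sum_{i=j+1}^{N_{\max}} 1_{\{S_i = i\}}$ (with $N_{\max} := m(2a+1)$), the process $M_j := \tilde M_j + A_j$ is a reverse martingale, and summing $E\,1_{\{S_i=i\}} = c^i$ yields $E(M_j) = c(1 - c^{N_{\max}})/(1-c)$ for every $j \le N_{\max}$. Moreover, $N_K$ is a reverse stopping time: $\{N_K \le j\} = \{K < k_{j+1}\}$ (where $k_i$ is the deterministic position of the $i$-th true null) is determined by $\phi(k)$ for $k \ge k_{j+1}$, each of which depends on $S_{N_k} = S_j + \xi_{j+1} + \cdots + \xi_{N_k}$ and hence on the generators of $\mathcal{G}_j$. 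Optional stopping for reverse martingales then yields $E(M_{N_K}) = c(1 - c^{N_{\max}})/(1-c)$.

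The hard part is to show $E(A_{N_K}) = 0$. On $\{S_i = i\}$ we have $\xi_1 = \cdots = \xi_i = 1$, so $D_{k_i} = 0$, $T_{k_i} = k_i$, and hence $\phi(k_i) = t/k_i$. I claim $\phi(k_i) \le a/b$ for every $i \ge 1$, which is equivalent to $k_i \ge bt/a = (b-u)/a$. Because $k_i \ge k_1$ and $u \ge a$, it suffices to verify $k_1 \ge (b-a)/a$. Multiplying the defining congruence of $L$ by $a$, one sees $k \in L \iff ak \pmod{a+b} \in \{0\} \cup [b-a,\,a+b-1]$; whenever $k < (b-a)/a$ the residue $ak$ lies in $[a, b-a-1]$, so $k \notin L$, which gives $k_1 \ge (b-a)/a$. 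Consequently $\phi(k_i) \le a/b$ on $\{S_i = i\}$, forcing $K \ge k_i$ and hence $N_K \ge i$, so $\{S_i = i\} \cap \{N_K < i\} = \varnothing$ for every $i \ge 1$ and $E(A_{N_K}) = \sum_{i \ge 1} P(S_i = i,\, N_K < i) = 0$.

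Combining these gives $E(\tilde M_{N_K}) = E(M_{N_K}) - E(A_{N_K}) = c(1-c^{N_{\max}})/(1-c) \to c/(1-c)$. Since $N_K = N_{\max}$ on $A_m^c$, we have $E(Z_m \cdot 1_{A_m}) = E(\tilde M_{N_K}) - E(\tilde M_{N_{\max}} \cdot 1_{A_m^c})$, and the second term vanishes: $P(A_m^c) \to 0$ by Lemma 2, while the a.s.\ convergence $\tilde M_{N_{\max}} \to c/(1-c)$ (strong law of large numbers) combined with $E\tilde M_{N_{\max}} \to c/(1-c)$ upgrades, via Scheff\'e's lemma, to $L^1$ convergence and hence uniform integrability. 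Therefore $E(Z_m \cdot 1_{A_m}) \to c/(1-c)$, which in particular gives the stated $\limsup$.
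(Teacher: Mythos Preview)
Your argument is correct, but it follows a genuinely different route from the paper's. The paper exploits exchangeability directly: with $X_n$ the number of true-null target wins before the first true-null decoy win, it observes that conditionally on $(I_K,D_K)$ the $D_K+1$ ``bins'' of true-null target wins between consecutive decoy wins are identically distributed and sum to $I_K$, while the first bin equals $X_n$; hence $E\big(\tfrac{I_K}{D_K+1}\,1_{A_m}\big)=E(X_n\,1_{A_m})$, and one finishes by passing along a subsequence where $1_{A_m}\to 1$ a.s.\ and $X_n\uparrow X\sim\text{Geom}(1-c)$ with $E(X)=c/(1-c)$. Your approach instead recasts $I_K/(D_K+1)$ as $\tilde M_{N_K}$ for the reverse supermartingale $\tilde M_j=S_j/(j-S_j+1)$, compensates it to a reverse martingale, and applies optional stopping at the reverse stopping time $N_K$; the only construction-specific input is the combinatorial bound $k_1\ge (b-a)/a$, which forces $A_{N_K}\equiv 0$ and turns the supermartingale identity into an exact expectation. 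What your route buys is a cleaner limit (you obtain convergence of $E(Z_m\,1_{A_m})$ directly, without extracting a Borel--Cantelli subsequence) and a transparent connection to the Barber--Cand\`es supermartingale framework underlying SSS+ itself; what the paper's route buys is brevity and a proof that needs no martingale machinery or verification of the position of the first true null. A minor imprecision: in your stopping-time step you show $\{N_K\le j\}\in\mathcal G_j$, whereas optional stopping in the reverse direction needs $\{N_K\ge j\}\in\mathcal G_j$; but your argument in fact gives $\{N_K\le j\}\in\mathcal G_{j+1}$ (since $S_{N_k}$ depends only on $S_{j+1}$ and later $\xi$'s when $N_k\ge j+1$), and replacing $j$ by $j-1$ yields exactly what is required.
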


\begin{proof}
Let $X_n$ denote the number of true nulls with $p_j \leq c$ before the first true null with $p_j > c$, i.e., the number of true null target wins before the first true null decoy win. If there are no such decoy wins then let $X_n = I_n$:
\begin{align*}
X_n = |\{ j \in L : p_j \leq c, j < \min\{ i : p_i > c \}\}|.
\end{align*}

Considering only the $N_K = D_K + I_K$ true nulls among the first $K$ hypotheses, each arrangement of the decoy and target wins is equally likely for a given number of decoy and target wins among the true nulls.

Consider the locations of the $D_K$ decoy wins as defining $D_K + 1$ bins, some possibly empty, and let $J_K^i$ be the number of target null wins that fall in the $i$-th bin for $i = 1, ..., D_K+1$. That is, $J_K^i$ is the number of null target wins whose score places them between the $(i-1)$-th and $i$-th null decoy wins (where the 0th and $D_K+1$-th decoy wins refer to the start and end of the combined list of null hypotheses).

The aforementioned symmetry implies that conditioning on $D_K$ and $I_K$, $E(J_K^i | D_K, I_I)$ should be the same for each $i$, and of course $\sum_{i=1}^{D_K+1} J_K^i = I_K$. It follows that $E(J_K^i | D_K, I_K) = I_K/(D_K+1)$. Additionally, also observe that $J_K^1 = X_n$. This is trivial in the cases that $D_K > 0$, as well as $D_K = 0$ and  $K = n$. If $D_K = 0$ and $K < n$, then the inequalities from from (\ref{Kmaxineq}) apply and thus we again have $p_{K+1} \in L$ so again we have $J_K^1 = X_n$.

Note that $1_{K < n} \in \sigma(I_K, D_K)$, thus it follows that
\begin{align*}
E(X_n \cdot 1_{K < n} | I_K, D_K) = \frac{I_K}{D_K+1} \cdot 1_{K < n},
\end{align*}
and in particular,
\begin{align*}
E(X_n \cdot 1_{K < n}) = E\Big(\frac{I_K}{D_K+1} \cdot 1_{K < n}\Big). 
\end{align*}

Now consider the limiting infinite sequence of true nulls and let $X$ denote the number of true null target wins before the first decoy win. Clearly $X$ is a geometric($1-c$) random variable and $X_n$ increases to $X$ almost surely. By Lemma \ref{lemma2} and the Borel-Cantelli Lemma there exists a subsequence $n_{m'}$ along which $1_{K < n} \to 1$ as $m' \to \infty$ almost surely. It follows by dominated convergence
\begin{align*}
\limsup_{m \to \infty} E\Big(\frac{I_K}{D_K + 1} \cdot 1_{K < n}\Big) = \limsup_{m \to \infty} E(X_n \cdot 1_{K < n_m}) = E(X) = \frac{c}{1-c}.
\end{align*}
\end{proof}

\bibliographystyle{plain}
\bibliography{bibliography.bib}

\begin{thebibliography}{1}

\bibitem{bar}
Rina~Foygel Barber and Emmanuel~J. Candès.
\newblock {Controlling the false discovery rate via knockoffs}.
\newblock {\em The Annals of Statistics}, 43(5):2055 -- 2085, 2015.

\bibitem{bh}
Yoav Benjamini and Yosef Hochberg.
\newblock Controlling the false discovery rate: A practical and powerful
  approach to multiple testing.
\newblock {\em Journal of the Royal Statistical Society: Series B
  (Methodological)}, 57(1):289--300, 1995.

\bibitem{candes:panning}
E.~J. Cand{\`e}s, Y.~Fan, L.~Janson, and J.~Lv.
\newblock Panning for gold: Model-{X} knockoffs for high-dimensional controlled
  variable selection.
\newblock {\em Journal of the Royal Statistical Society: Series B (Statistical
  Methodology)}, 80(3):551--577, 2018.

\bibitem{elias}
Joshua~E. Elias and Steven~P. Gygi.
\newblock Target-decoy search strategy for increased confidence in large-scale
  protein identifications by mass spectrometry.
\newblock {\em Nature Methods}, 4(3):207--214, Mar 2007.

\bibitem{he}
Kun He, Yan Fu, Wen-Feng Zeng, Lan Luo, Hao Chi, Chao Liu, Lai-Yun Qing,
  Rui-Xiang Sun, and Si-Min He.
\newblock A theoretical foundation of the target-decoy search strategy for
  false discovery rate control in proteomics, 2015.

\bibitem{lei:power}
L.~Lei and W.~Fithian.
\newblock Power of ordered hypothesis testing.
\newblock In {\em International Conference on Machine Learning}, pages
  2924--2932, 2016.

\end{thebibliography}

\end{document}